\documentclass[letterpaper, 10 pt, conference]{ieeeconf}
\IEEEoverridecommandlockouts
\usepackage{amsmath,amssymb}

\usepackage{amsthm}
\usepackage{mathtools}
\usepackage{graphicx}
\usepackage{bm}
\usepackage{subfig}
\usepackage{algorithm}
\usepackage{algpseudocode}
\usepackage{tabularx}  

\newtheorem{lemma}{Lemma}
\newtheorem{proposition}{Proposition}
\newtheorem{theorem}{Theorem}
\newtheorem{assumption}{Assumption}
\newtheorem{remark}{Remark}
{}

\title{Fast and Scalable Multi-Agent Distribution Matching via Partitioned Optimal Transport}

\author{Kooktae Lee$^{1}$ and Ruchika Singh$^{1}$
	\thanks{*This work was supported by NSF CAREER Grant CMMI-DCSD-2638508.}
	\thanks{Kooktae Lee and $^{1}$Ruchika Singh are with the Department of Mechanical and Aerospace Engineering, Texas Tech University, Lubbock, TX 79409, USA, email: kooktae.lee@ttu.edu, ruchisin@ttu.edu.} 
}

\begin{document}

\maketitle
\thispagestyle{empty}
\pagestyle{empty}

\begin{abstract}
This paper presents a scalable optimal-transport-based framework for
terminal distribution matching in multi-agent systems. While optimal
transport provides a natural way to measure distributional mismatch and
assign agents to a desired spatial distribution, global discrete
transport can become computationally expensive for large-scale systems.
We address this bottleneck by partitioning agents and target samples into
spatially corresponding blocks and solving smaller local transport
problems. Under a mass-balance condition, the resulting restricted
coupling remains feasible for the global problem and provides an upper
bound on the Wasserstein cost. The local assignments generate target
locations for finite-horizon agent control, applicable to both linear and
nonlinear dynamics. By alternating local assignment and control, we
establish a cycle-to-cycle descent guarantee for the resulting transport
surrogate. The proposed framework therefore enables scalable terminal distribution matching while retaining a rigorous connection to the Wasserstein objective. The technical soundness of the proposed results is validated through simulations.
\end{abstract}

\medskip
\noindent\textbf{Index Terms}: Multi-agent systems, optimal transport, Wasserstein metric, distribution matching, spatial coverage.

\section{Introduction}

Coordinating a group of autonomous agents to achieve a desired spatial
configuration is a fundamental problem in multi-agent systems. In many
applications, the objective is not to drive each agent to a predetermined
location, but to deploy agents across a region according to the spatial
demand of the task. This type of spatial deployment arises in multi-robot
coverage, surveillance, exploration, and environmental sampling, where
coverage and sensing objectives can guide the spatial placement of
multiple agents \cite{cortes2004,inoue2021,bandyopadhyay2014}.
A number of approaches have therefore formulated multi-agent coordination
in terms of spatial distributions, including density-driven optimal
control (D$^2$OC) for coverage under nonuniform reference densities
\cite{seo2025d2oc,seo2026tcst,lee2026automatica}. These approaches
primarily characterize desired spatial behavior through time-averaged
agent trajectories. In this paper, we instead formulate the deployment
objective as a \textit{terminal distribution matching} problem, where the agent
configuration at a prescribed terminal time is matched to a given target
distribution.

A closely related terminal-distribution formulation was developed by
\cite{lee2026tac}, where decentralized multi-agent distribution matching
was achieved through an optimal-transport-based Wasserstein objective.
Optimal transport (OT) \cite{villani2009optimal} provides a natural
framework for this problem by measuring the mismatch between the current
and desired distributions \cite{kantorovich1942,gabriel2019computational} and determining how agents should be
associated with the target distribution. Thus, OT provides both a measure
of distributional mismatch and an assignment mechanism connecting
individual agents to the desired spatial distribution. 
Additional literature on OT-based distribution steering and density control can be found in \cite{chen2021ot,chen2024density}.

The computational cost of this assignment grows with the numbers of
agents and target samples, since discrete OT considers associations
between agents and target samples \cite{gabriel2019computational}. This becomes
particularly challenging for large-scale multi-agent systems, motivating
scalable and distributed OT methods \cite{krishnan2025}. Spatial
structure has also been exploited in semi-discrete OT to obtain
tractable transport partitions \cite{hartmann2020}. These approaches
motivate exploiting spatial structure to reduce the computational burden
of OT while retaining its role in assigning agents to a desired spatial
distribution.

This paper addresses the computational bottleneck of large-scale OT
by exploiting spatial structure in the assignment problem. Our prior
work \cite{lee2026tac} developed both a centralized sequential-assignment framework
and a memory-based decentralized variant for terminal distribution
matching. The present paper builds on the centralized formulation of
\cite{lee2026tac}, which is the natural choice for applications requiring structured,
globally coordinated deployment. The decentralized variant of \cite{lee2026tac} instead
targets settings where such global coordination is unavailable, and is
not the focus here. Even in the centralized setting, however, the
assignment step of \cite{lee2026tac} sequentially couples $M$ agents with $N$ target
samples, and this coupling becomes computationally expensive as the
number of target samples used to represent the desired distribution
grows. To overcome this limitation, this work introduces a spatially
partitioned computational architecture.

The contributions of this paper are thus threefold:
(i) a spatially partitioned OT formulation that strictly preserves global feasibility via block mass-balance constraints;
(ii) an integration of these localized assignments with the barycentric control and target-generation framework; and
(iii) a rigorous cycle-to-cycle descent guarantee for the transport surrogate, which maintains a direct upper bound on the Wasserstein distance.

\section{Preliminaries and Problem Formulation}
\label{sec:prelim}

This section introduces the multi-agent dynamics and terminal
distribution matching problem, followed by the discrete optimal transport
formulation used to measure distribution mismatch. We then summarize the
sequential assignment--control framework of \cite{lee2026tac}, which
serves as the basis for the computational reduction developed in this
work.

\subsection{Terminal Distribution Matching}

Consider a swarm of $M$ agents, where the state of agent $i$ at discrete time $k$ is
$x_i(k)\in\mathbb{R}^d$. Each agent evolves according to either linear
dynamics
\begin{equation}
x_i(k+1)
=
Ax_i(k)+Bu_i(k),
\label{eq:lti}
\end{equation}
or control-affine nonlinear dynamics
\begin{equation}
x_i(k+1)
=
f(x_i(k))+g(x_i(k))u_i(k).
\label{eq:nonlinear}
\end{equation}
Let $N$ target samples $y_j\in\mathbb{R}^d$ represent the desired
spatial distribution. The empirical agent distribution at time $k$ and the target distribution, respectively, are defined by
\begin{equation}
\mu_k
:=
\frac{1}{M}\sum_{i=1}^{M}\delta_{x_i(k)}, \quad
\nu
:=
\frac{1}{N}\sum_{j=1}^{N}\delta_{y_j},
\end{equation}
where $\delta_z$ denotes the Dirac measure concentrated at $z$.
Each agent carries a mass of $1/M$, while each target sample carries
a mass of $1/N$.

Given a terminal time $T$, the objective is to steer the terminal
empirical distribution
$
\mu_T
=
\frac{1}{M}\sum_{i=1}^{M}\delta_{x_i(T)}
$
toward the prescribed target distribution $\nu$. 

\subsection{Discrete Optimal Transport}

The distribution mismatch between $\mu$ and $\nu$ is measured using
the squared 2-Wasserstein distance. The set of transport couplings is
\begin{equation}
\Pi(\mu,\nu)
=
\left\{
\pi\in\mathbb{R}_+^{M\times N}:
\sum_{j=1}^{N}\pi_{ij}=\frac{1}{M},
\;
\sum_{i=1}^{M}\pi_{ij}=\frac{1}{N}
\right\},
\label{eq:transport_set}
\end{equation}
where the transport plan $\pi_{ij}\geq0$ denotes the amount of mass
transported from agent $i$ to target sample $j$. The prescribed row and
column marginals preserve the total mass since they imply
$\sum_{i=1}^{M}\sum_{j=1}^{N}\pi_{ij}=1$.

The squared 2-Wasserstein distance between the empirical distributions
is then
\begin{equation}
\mathcal{W}_2^2(\mu,\nu)
=
\min_{\pi\in\Pi(\mu,\nu)}
\sum_{i=1}^{M}\sum_{j=1}^{N}
\pi_{ij}\|x_i-y_j\|^2.
\label{eq:global_ot}
\end{equation}

Note that for any feasible coupling $\pi\in\Pi(\mu,\nu)$, we have
\begin{equation}
\mathcal{W}_2^2(\mu,\nu)
\leq
\sum_{i=1}^{M}\sum_{j=1}^{N}
\pi_{ij}\|x_i-y_j\|^2.
\label{eq:ot_upper}
\end{equation}
Thus, the transport cost of a feasible coupling provides an upper bound
on the Wasserstein distance without requiring the coupling to be
globally optimal.

\subsection{Terminal Distribution Matching: Sequential Assignment and Control}

The terminal distribution matching framework of \cite{lee2026tac}
addresses the distribution-level objective through an alternating
assignment--control procedure. Rather than directly optimizing the
terminal states of all agents jointly, which is computationally intractable due to high non-convexity and high dimensionality, target mass is assigned to the
agents sequentially. Starting from the prescribed target sample
weights, each agent receives transport mass according to its current
assignment cost, and the assigned target samples determine a barycentric
terminal reference
$
b_i
=
\frac{\sum_{j=1}^{N}\pi_{ij}y_j}
{\sum_{j=1}^{N}\pi_{ij}},
$
where the sequential allocation of target samples ensures that assigned weights are appropriately decremented, preventing multiple agents from being assigned to the same target samples.

The agents are then controlled toward these barycentric targets over a
finite horizon. After the control horizon is completed, the target
sample weights are reinitialized to their prescribed values and the
sequential assignment--control procedure is repeated using the updated
agent states.

The resulting procedure converts the distribution-level matching
problem into a sequence of agent-level finite-horizon control problems.
However, the assignment step still sequentially couples $M$ agents with $N$
target samples, which can become computationally expensive as the
problem size increases. This motivates the partitioned formulation
developed in the next section, where the transport coupling is
restricted to corresponding spatial blocks. Under a suitable mass-balance
condition, the resulting problem decomposes into smaller local
transport problems while preserving feasibility with respect to the
global transport formulation.

\section{Partitioned Optimal Transport}
\label{sec:partition}

\subsection{Spatial Partition and Mass Balance}

Suppose that the $M$ agents and $N$ target samples are partitioned into
$K$ corresponding spatial blocks. The blocks are chosen so that agents
and target samples located in the same spatial region are grouped
together. Let
\begin{equation}
\mathcal{X}
=
\bigcup_{r=1}^{K}\mathcal{X}^{(r)},
\qquad
\mathcal{Y}
=
\bigcup_{r=1}^{K}\mathcal{Y}^{(r)},
\end{equation}
where the blocks are pairwise disjoint. Denote the numbers of agents and
target samples in block $r$ by
\begin{equation}
M_r = |\mathcal{X}^{(r)}|,
\qquad
N_r = |\mathcal{Y}^{(r)}|.
\end{equation}
We impose the following mass-balance condition on corresponding blocks:

\begin{assumption}[Block Mass Balance]
\label{ass:balance}
For every block $r$,
\begin{equation}
\frac{M_r}{M}
=
\frac{N_r}{N}
=:q_r.
\end{equation}
\end{assumption}

Thus, a block containing a fraction $q_r$ of the agents is paired with
a corresponding target block containing the same fraction of the target
samples. Importantly, the corresponding blocks need not contain the same
number of agents and target samples. Only their normalized empirical
masses must agree. For example, if $M=30$ and $N=600$, a block containing
10 agents may be paired with a block containing 200 target samples,
since both blocks contain one-third of their respective populations.
The blocks may be spatially defined, for example, by clustering or a
fixed spatial partition, but no particular partitioning procedure is
required by the analysis. The key requirement is the mass-balance
condition, which ensures that each agent block can be transported
entirely to its corresponding target block.

\subsection{Block-Restricted Coupling and Feasibility}

Define the block-restricted transport set
\begin{equation}
\Pi_{\mathrm{blk}}(\mu,\nu)
=
\left\{
\pi\in\Pi(\mu,\nu) \;\middle|\;
\begin{aligned}
&\pi_{ij}=0 \quad \text{if } x_i\in\mathcal{X}^{(r)}, \\
&\quad y_j\in\mathcal{Y}^{(m)} \text{ with } r\neq m
\end{aligned}
\right\}.
\label{eq:block_coupling}
\end{equation}

The corresponding restricted OT problem is
\begin{equation}
\pi_{\mathrm{blk}}^\star
\in
\arg\min_{\pi\in\Pi_{\mathrm{blk}}(\mu,\nu)}
\sum_{i=1}^{M}\sum_{j=1}^{N}
\pi_{ij}\|x_i-y_j\|^2.
\label{eq:restricted_ot}
\end{equation}

The next result shows that this restriction costs nothing in terms of
feasibility. Any block-diagonal coupling constructed under
Assumption~\ref{ass:balance} is automatically a valid coupling for the
unrestricted problem~\eqref{eq:global_ot}.

\begin{proposition}[Nonemptiness of the  Coupling Set]
\label{prop:feasible}
Under Assumption~\ref{ass:balance}, the set $\Pi_{\mathrm{blk}}(\mu,\nu)$
defined in~\eqref{eq:block_coupling} is nonempty. Since
$\Pi_{\mathrm{blk}}(\mu,\nu) \subseteq \Pi(\mu,\nu)$ by construction,
every block-restricted coupling is therefore a feasible coupling for the
original OT problem, and consequently provides an upper bound on the
squared Wasserstein distance.
\end{proposition}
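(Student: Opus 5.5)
The plan is to exhibit one explicit element of $\Pi_{\mathrm{blk}}(\mu,\nu)$, namely the blockwise product (independent) coupling. For indices $i$ with $x_i\in\mathcal{X}^{(r)}$ and $j$ with $y_j\in\mathcal{Y}^{(m)}$, set
\begin{equation*}
\bar\pi_{ij}
=
\begin{cases}
\dfrac{1}{M N_r} = \dfrac{1}{N M_r}, & r=m,\\[1ex]
0, & r\neq m .
\end{cases}
\end{equation*}
The two expressions agree precisely because of Assumption~\ref{ass:balance}: $M_r/M=N_r/N$ gives $M N_r = N M_r$. Equivalently, $\bar\pi_{ij}=q_r\cdot\frac{1}{M_r}\cdot\frac{1}{N_r}$, i.e.\ block $r$ carries total mass $q_r$ spread uniformly over its $M_r\times N_r$ entries.

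Next I verify the constraints in~\eqref{eq:transport_set}. Nonnegativity is immediate. For the row sums, fix an agent $i$ in block $r$. Only the $N_r$ targets in $\mathcal{Y}^{(r)}$ contribute, so $\sum_j\bar\pi_{ij}=N_r\cdot\frac{1}{MN_r}=\frac1M$. For the column sums, fix a target $j$ in block $r$. Only the $M_r$ agents in $\mathcal{X}^{(r)}$ contribute, so $\sum_i\bar\pi_{ij}=M_r\cdot\frac{1}{NM_r}=\frac1N$, using the second form of the entry.

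The zero pattern holds by definition. Hence $\bar\pi\in\Pi_{\mathrm{blk}}(\mu,\nu)$, and the set is nonempty. One small point to note: every block used in the partition has $M_r,N_r\ge1$, since empty blocks can be discarded; by mass balance, $M_r=0$ if and only if $N_r=0$, so this causes no issue.

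The remaining claims are then bookkeeping. The inclusion $\Pi_{\mathrm{blk}}\subseteq\Pi$ holds because~\eqref{eq:block_coupling} is defined as a subset of $\Pi(\mu,\nu)$. Applying~\eqref{eq:ot_upper} to any $\pi\in\Pi_{\mathrm{blk}}$ gives the upper bound on $\mathcal{W}_2^2(\mu,\nu)$. The restricted minimum in~\eqref{eq:restricted_ot} is also attained, since $\Pi_{\mathrm{blk}}$ is a nonempty compact polytope and the objective is linear. The only step with real content is checking that the two marginal conditions are simultaneously satisfiable blockwise, and that is exactly where Assumption~\ref{ass:balance} enters. Without it, summing the row constraints over block $r$ gives total mass $M_r/M$, while summing the column constraints gives $N_r/N$, and these would be forced to be equal.
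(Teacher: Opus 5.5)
Your proof is correct and follows essentially the same route as the paper: on each matched block pair you build a coupling carrying the common block mass $q_r$, which Assumption~\ref{ass:balance} makes consistent, then set all cross-block entries to zero and check the marginals. The only difference is that you instantiate the paper's unspecified ``standard finite discrete coupling'' as the explicit uniform product coupling with entries $q_r/(M_rN_r)$, which makes the row- and column-sum verification fully concrete.
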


\begin{proof}
The inclusion $\Pi_{\mathrm{blk}}(\mu,\nu)\subseteq\Pi(\mu,\nu)$ holds
immediately from~\eqref{eq:block_coupling}, which restricts
$\Pi(\mu,\nu)$ by an additional sparsity condition. It remains to show
$\Pi_{\mathrm{blk}}(\mu,\nu)\neq\emptyset$.

Fix a block $r$. The total probability mass carried by the agent
samples in this block is
$
\sum_{x_i\in\mathcal{X}^{(r)}}\frac{1}{M}
=
\frac{M_r}{M}.
$
Similarly, the total target mass in the corresponding block is
$
\sum_{y_j\in\mathcal{Y}^{(r)}}\frac{1}{N}
=
\frac{N_r}{N}.
$
By Assumption~\ref{ass:balance}, these quantities are equal.

Therefore, a coupling between the normalized empirical measures
supported on $\mathcal{X}^{(r)}$ and $\mathcal{Y}^{(r)}$ exists.
After normalization by the common block mass, any standard finite
discrete coupling can be constructed between the two finite uniform
measures. Multiplying it by the common block mass gives a local coupling
whose row sums are $1/M$ and whose column sums are $1/N$.

Perform this construction independently for every block and set all
cross-block entries to zero. The resulting matrix $\pi$ is nonnegative
and satisfies $\sum_j\pi_{ij}=\frac1M$ for every agent $i$ and
$\sum_i\pi_{ij}=\frac1N$ for every target $j$, matching the marginal
conditions in~\eqref{eq:transport_set}. Moreover, by construction
$\pi_{ij}=0$ whenever $x_i\in\mathcal{X}^{(r)}$ and
$y_j\in\mathcal{Y}^{(m)}$ with $r\neq m$, which is exactly the block
restriction in~\eqref{eq:block_coupling}. Hence, $\pi$ is a feasible
global coupling satisfying the prescribed block restriction. In particular, such a coupling exists, thus $\Pi_{\mathrm{blk}}(\mu,\nu)$ is
nonempty.

Finally, since a coupling satisfying the block restriction
of~\eqref{eq:block_coupling} is, by definition, itself a coupling in
$\Pi(\mu,\nu)$, and the Wasserstein cost is the minimum over
$\Pi(\mu,\nu)$, any global coupling $\pi$ satisfying the block
restriction gives
\[
\mathcal{W}_2^2(\mu,\nu)
\leq
\sum_{i,j}\pi_{ij}\|x_i-y_j\|^2.
\vspace{-.35in}
\]
\end{proof}

\subsection{Barycentric Targets}
Let
$\pi\in\Pi(\mu,\nu)$ be the global transport plan obtained by assembling
the local couplings from the corresponding blocks. Since its row marginal
is uniform,
$
\omega_i
:=
\sum_{j=1}^{N}\pi_{ij}
=
\frac1M.
$

The barycentric target associated with agent $i$ is
\begin{equation}
b_i
:=
\frac{1}{\omega_i}
\sum_{j=1}^{N}\pi_{ij}y_j
=
M\sum_{j=1}^{N}\pi_{ij}y_j.
\label{eq:barycenter}
\end{equation}

Because $\pi_{ij}\geq0$ and
$\sum_j\pi_{ij}=1/M$,
the coefficients $M\pi_{ij}$ form a probability vector. As a result,
\begin{equation}
b_i
\in
\operatorname{conv}
\left\{
y_j:\pi_{ij}>0
\right\}.
\label{eq:bary_conv}
\end{equation}
Every agent is therefore assigned a single reachable point rather than a
distribution of targets. The barycenter also admits a natural
variational characterization.

\begin{proposition}
\label{prop:barycenter}
For every agent $i$, the point $b_i$ in~\eqref{eq:barycenter} is the
unique minimizer of
\begin{equation}
F_i(z)
=
\sum_{j=1}^{N}\pi_{ij}\|z-y_j\|^2
\end{equation}
over $z\in\mathbb{R}^d$.
\end{proposition}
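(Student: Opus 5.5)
The plan is to show that $F_i$ is a strictly convex quadratic in $z$, locate its unique stationary point, and identify that point with $b_i$ from \eqref{eq:barycenter}. The only structural fact needed is the row-marginal identity $\omega_i=\sum_{j}\pi_{ij}=1/M>0$, which holds because $\pi\in\Pi(\mu,\nu)$ (and, by Proposition~\ref{prop:feasible}, this covers the assembled block-restricted plan).

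The cleanest route is a completing-the-square identity, which avoids invoking convexity theory. Write $z-y_j=(z-b_i)+(b_i-y_j)$ and expand:
\begin{equation*}
F_i(z)=\omega_i\|z-b_i\|^2+2\Big(z-b_i\Big)^{\top}\sum_{j=1}^{N}\pi_{ij}(b_i-y_j)+\sum_{j=1}^{N}\pi_{ij}\|b_i-y_j\|^2 .
\end{equation*}
By definition of $b_i$ we have $\sum_j\pi_{ij}y_j=\omega_i b_i$. Hence $\sum_j\pi_{ij}(b_i-y_j)=\omega_i b_i-\omega_i b_i=0$, and the cross term vanishes. This leaves
\begin{equation*}
F_i(z)=\frac{1}{M}\|z-b_i\|^2+F_i(b_i).
\end{equation*}

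Since $1/M>0$, we get $F_i(z)\ge F_i(b_i)$, with equality if and only if $z=b_i$. This gives both minimality and uniqueness at once.

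As an alternative check, one can compute $\nabla F_i(z)=2\sum_j\pi_{ij}(z-y_j)=2(\omega_i z-\omega_i b_i)$ and the Hessian $2\omega_i I\succ0$. Strict convexity then gives uniqueness of the stationary point $z=b_i$.

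There is no real obstacle. The one point to be careful about is that uniqueness relies on $\omega_i>0$. This is exactly where the uniform row marginal $1/M$ is used: if some agent carried zero mass, $F_i$ would be identically zero and the minimizer would not be unique.
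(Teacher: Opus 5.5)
Your proof is correct, but your main argument differs from the paper's. The paper's proof is exactly what you list as your alternative check. It computes $\nabla F_i(z)=2\sum_j\pi_{ij}(z-y_j)$ and solves the stationarity condition using $\sum_j\pi_{ij}=1/M>0$ to get $z=b_i$. It then concludes from the Hessian $\frac{2}{M}I\succ0$ that this stationary point is the unique global minimizer.

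Your completing-the-square route instead proves the exact identity $F_i(z)=\frac1M\|z-b_i\|^2+F_i(b_i)$. This gives minimality and uniqueness in one line, without appealing to the convexity fact that a stationary point of a strictly convex function is its unique global minimizer. It also quantifies the suboptimality gap explicitly.

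A further benefit is that evaluating your identity at $z=x_i$ and summing over $i$ gives precisely the barycentric decomposition \eqref{eq:exact_decomposition}, with $V_\pi=\sum_i F_i(b_i)$. So your argument also proves Proposition~\ref{prop:decomposition}, which the paper only cites from prior work. That decomposition, rather than the minimizer characterization itself, is what Lemma~\ref{lem:control_descent} relies on.

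The calculus route, for its part, derives $b_i$ from the first-order condition rather than needing the candidate minimizer in advance. Here that difference is immaterial, since $b_i$ is given by definition. Both arguments use the positivity of the row marginal at exactly the same point.
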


\begin{proof}
Differentiating gives
$
\nabla F_i(z)
=
2\sum_j\pi_{ij}(z-y_j).
$
The first-order optimality condition is therefore
\[
\bigg(\sum_j\pi_{ij}\bigg)z
=
\sum_j\pi_{ij}y_j.
\]
Since
$
\sum_j\pi_{ij}=\frac1M>0,
$
the unique stationary point is
\[
z
=
M\sum_j\pi_{ij}y_j
=
b_i.
\]
The Hessian is
$
\nabla^2F_i(z)
=
2\bigg(\sum_j\pi_{ij}\bigg)I
=
\frac{2}{M}I\succ0,
$
thus this stationary point is the unique global minimizer.
\end{proof}

\subsection{Barycentric Control Interpretation}

The control step requires a state-dependent objective that can be analyzed for a fixed transport assignment. Although the Wasserstein distance is the ultimate distribution-matching objective, its definition involves a minimization over the transport coupling, which generally depends on the agent states. Directly analyzing this nested optimization is therefore challenging for deriving agent-level control laws.

To avoid the computational intractability, we freeze the transport coupling computed at the beginning of each control cycle. This design choice decouples the assignment from the ongoing motion, allowing us to treat the assigned targets as stationary references during the finite-horizon control phase. For any
feasible coupling $\pi\in\Pi(\mu,\nu)$, its associated transport cost is
$
\sum_{i=1}^{M}\sum_{j=1}^{N}
\pi_{ij}\|x_i-y_j\|^2.
$
Because $\pi$ is itself a feasible coupling for the Wasserstein problem,
this fixed-coupling cost provides an upper bound on the Wasserstein
distance:
\[
\mathcal{W}_2^2(\mu,\nu)
\leq
\sum_{i=1}^{M}\sum_{j=1}^{N}
\pi_{ij}\|x_i-y_j\|^2.
\]
Thus, reducing the fixed-coupling transport cost reduces a certified
upper bound on the Wasserstein objective.

The connection between this fixed-coupling transport cost and
barycentric target tracking has already been established in our
previous work \cite{lee2026tac}. In particular, the weighted quadratic
transport cost can be decomposed into a barycentric tracking term and a
coupling-dependent residual. We restate this identity below in the
notation used in this paper, as it provides the basis for the
finite-horizon control analysis that follows.

Defining the fixed-coupling transport surrogate cost
\begin{equation}
\Psi_\pi(x)
:=
\sum_{i=1}^{M}\sum_{j=1}^{N}
\pi_{ij}\|x_i-y_j\|^2,
\label{eq:psi}
\end{equation}
the following result holds.

\begin{proposition}[Barycentric Decomposition]
\label{prop:decomposition}
For any feasible coupling $\pi\in\Pi(\mu,\nu)$, define the barycentric
target of agent $i$ as in \eqref{eq:barycenter}.

Then, the fixed-coupling transport cost satisfies
\begin{equation}
\Psi_\pi(x)
=
\frac1M
\sum_{i=1}^{M}
\|x_i-b_i\|^2
+
V_\pi,
\label{eq:exact_decomposition}
\end{equation}
where
\begin{equation}
V_\pi
=
\sum_{i=1}^{M}
\sum_{j=1}^{N}
\pi_{ij}\|y_j-b_i\|^2
\geq0.
\label{eq:variance}
\end{equation}
Moreover, $V_\pi$ is independent of the agent states.
\end{proposition}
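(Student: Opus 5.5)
The plan is to expand each squared distance around the barycenter and let the cross term vanish by the definition of $b_i$. For fixed agent $i$ and target $j$, write $x_i-y_j=(x_i-b_i)+(b_i-y_j)$. This gives
\[
\|x_i-y_j\|^2=\|x_i-b_i\|^2+2\langle x_i-b_i,\,b_i-y_j\rangle+\|b_i-y_j\|^2 .
\]
We then multiply by $\pi_{ij}$ and sum over $j$ first, keeping $i$ fixed. This produces three terms per agent, which we treat separately.

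The first term is $\big(\sum_j\pi_{ij}\big)\|x_i-b_i\|^2$. By the row-marginal constraint in \eqref{eq:transport_set}, $\sum_j\pi_{ij}=1/M$, so this equals $\frac1M\|x_i-b_i\|^2$.

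The cross term is the one to handle carefully, although it is short. Since $x_i-b_i$ does not depend on $j$, it factors out as
\[
2\Big\langle x_i-b_i,\;\sum_j\pi_{ij}(b_i-y_j)\Big\rangle .
\]
Here $\sum_j\pi_{ij}b_i=\frac1M b_i$, and by \eqref{eq:barycenter} also $\sum_j\pi_{ij}y_j=\frac1M b_i$. So the inner sum is zero and the cross term drops out. Equivalently, this is the first-order optimality condition $\nabla F_i(b_i)=0$ from Proposition~\ref{prop:barycenter}, and we can cite that instead.

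The third term, summed over $i$, is exactly $V_\pi$ in \eqref{eq:variance}. Summing the three contributions over $i$ then yields \eqref{eq:exact_decomposition}.

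The two remaining claims are immediate. We have $V_\pi\ge 0$ because it is a sum of products of nonnegative weights $\pi_{ij}$ and squared norms. $V_\pi$ is independent of the agent states because it involves only $\pi$, the $y_j$, and the $b_i$, and each $b_i$ is built from $\pi$ and the $y_j$ alone. The one point to state explicitly is that the coupling $\pi$ is held fixed, as the text describes: it is frozen at the start of the cycle. Under that convention it does not vary with $x$, and the independence claim holds for the surrogate $\Psi_\pi$.
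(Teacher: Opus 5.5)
Your proof is correct and follows the same route as the paper. You expand each term around $b_i$, use the definition of $b_i$ to make the cross term vanish, and use the row marginal $\sum_j\pi_{ij}=1/M$ to get the coefficient $1/M$; you simply write out explicitly the computation that the paper delegates to the identity in \cite{lee2026tac}.
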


The proposition follows directly from the weighted quadratic
barycentric identity established in \cite{lee2026tac}. In particular,
the cross term vanishes by the definition of $b_i$, while the row
marginal condition
$\sum_j\pi_{ij}=1/M$ yields the coefficient $1/M$ in
\eqref{eq:exact_decomposition}.

The residual $V_\pi$ represents the weighted dispersion of the target
samples associated with each agent under the coupling. Consequently,
driving the agents to their barycentric targets does not generally make
the transport cost zero. If every row of $\pi$ transports all of its
mass to a single target, then $V_\pi=0$. In particular, for a one-to-one
permutation coupling, exact barycentric reaching is equivalent to exact
matching of the corresponding target samples.

Because $V_\pi$ is independent of the agent states for a fixed coupling,
minimizing
$
\frac1M\sum_{i=1}^{M}\|x_i-b_i\|^2
$
reduces $\Psi_\pi(x)$ by exactly the same amount. Thus, the transport
assignment determines independent barycentric targets, after which the
control problem can be formulated as a collection of agent-level
finite-horizon target-reaching problems. The novel aspect considered
next is the use of the spatially partitioned coupling, which allows
these transport assignments to be computed through smaller local
problems.

\section{Finite-Horizon Barycentric Control}
\label{sec:control}

Given the barycentric targets of Section~\ref{sec:partition}, this section
addresses how each agent reaches, or contracts toward, its assigned
target over a finite control horizon. Linear dynamics admit an exact
closed-form controller while nonlinear dynamics are treated generically,
through a contraction condition that any suitable controller or
synthesis method may satisfy. As these results are standard finite-horizon
control facts, we state them without proof and refer the interested
reader to the full derivation in our related work \cite{lee2026tac}.

\subsection{Linear Dynamics}

Consider the LTI dynamics~\eqref{eq:lti}. Over a horizon of $H$ steps,
\begin{equation}
x_i(k+H)
=
A^H x_i(k)
+
\Phi_H U_i,
\label{eq:horizon}
\end{equation}
where
$
\Phi_H
:=
\begin{bmatrix}
A^{H-1}B & A^{H-2}B & \cdots & B
\end{bmatrix}
$
and
$
U_i
:=
\begin{bmatrix}
u_i(k)^\top &
u_i(k+1)^\top &
\cdots &
u_i(k+H-1)^\top
\end{bmatrix}^{\top}.
$

\begin{assumption}[Finite-Horizon Reachability]
\label{ass:reachability}
The matrix $\Phi_H$ has full row rank.
\end{assumption}
This ensures reachability over horizon $H$, guaranteeing that arbitrary terminal states are achievable for feasible barycentric tracking.

\begin{lemma}(Minimum-Norm Exact Barycentric Control, Lemma 2 in \cite{lee2026tac})
\label{lemma:lti}
Under Assumption~\ref{ass:reachability}, the minimum-$\ell_2$-norm
control sequence satisfying $x_i(k+H)=b_i$ is
\begin{equation}
U_i^\star
=
\Phi_H^\top
(\Phi_H\Phi_H^\top)^{-1}
\left(
b_i-A^Hx_i(k)
\right).
\label{eq:lti_control}
\end{equation}
Consequently, $x_i(k+H)=b_i$ for every agent $i$, implying that the barycentric target
is reached exactly in $H$ steps.
\end{lemma}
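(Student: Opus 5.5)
The plan is to treat the terminal-state requirement as an underdetermined linear equation in the stacked input $U_i$ and to solve the associated least-norm problem.

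\textbf{Reduction to a linear equation.} By the horizon expansion \eqref{eq:horizon}, the requirement $x_i(k+H)=b_i$ is equivalent to
\[
\Phi_H U_i = b_i - A^H x_i(k) =: r_i ,
\]
where $r_i\in\mathbb{R}^d$ is fixed once $x_i(k)$ and $b_i$ are known. Thus the task is to find the minimum-norm solution of $\Phi_H U_i = r_i$.

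\textbf{Feasibility and the candidate solution.} Under Assumption~\ref{ass:reachability}, $\Phi_H$ has full row rank $d$, so the Gram matrix $\Phi_H\Phi_H^\top\in\mathbb{R}^{d\times d}$ is symmetric positive definite and hence invertible. Substituting \eqref{eq:lti_control} gives
\[
\Phi_H U_i^\star = \Phi_H\Phi_H^\top(\Phi_H\Phi_H^\top)^{-1} r_i = r_i ,
\]
so $U_i^\star$ is feasible.

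\textbf{Minimality and uniqueness.} Let $U_i$ be any other feasible input and write $U_i = U_i^\star + w$. Then $\Phi_H w = 0$. Since $U_i^\star = \Phi_H^\top \lambda$ with $\lambda = (\Phi_H\Phi_H^\top)^{-1} r_i$, we get
\[
\langle U_i^\star, w\rangle = \lambda^\top \Phi_H w = 0 .
\]
By Pythagoras, $\|U_i\|^2 = \|U_i^\star\|^2 + \|w\|^2 \ge \|U_i^\star\|^2$, with equality only if $w=0$. Hence $U_i^\star$ is the unique minimum-$\ell_2$-norm feasible control. Equivalently, one may start from the Lagrangian of $\min \|U\|^2$ subject to $\Phi_H U = r_i$; the orthogonality argument is cleaner because it also gives uniqueness directly.

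\textbf{Conclusion and main point of care.} Plugging $U_i^\star$ into \eqref{eq:horizon} yields $x_i(k+H)=b_i$ for every agent, since the agents' dynamics are decoupled. The only substantive step is the invertibility of $\Phi_H\Phi_H^\top$. This rests on the fact that full row rank gives $\ker\Phi_H^\top=\{0\}$, so that $v^\top\Phi_H\Phi_H^\top v=\|\Phi_H^\top v\|^2>0$ for $v\neq 0$. Everything else is routine linear algebra.
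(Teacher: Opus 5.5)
Your proof is correct and complete. The paper gives no proof of this lemma. It states it as a standard finite-horizon fact and defers to Lemma~2 of \cite{lee2026tac}, so there is no in-text argument to compare against.

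Your argument supplies the standard least-norm derivation:
\begin{itemize}
\item reduce $x_i(k+H)=b_i$ to $\Phi_H U_i = b_i - A^H x_i(k)$ via the horizon expansion;
\item use full row rank to make $\Phi_H\Phi_H^\top$ positive definite, hence invertible;
\item verify that $U_i^\star$ is feasible and then show it is minimal.
\end{itemize}

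Your orthogonality step gives global minimality and uniqueness in one stroke, since $U_i^\star \in \operatorname{range}\Phi_H^\top$, which is orthogonal to $\ker\Phi_H$. The more common Lagrange-multiplier derivation only identifies the stationary point $U_i=\Phi_H^\top\lambda$. It must then appeal to strict convexity of $\|U_i\|^2$ on the affine feasible set to conclude. Your version is therefore, if anything, the more self-contained of the two.
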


\subsection{Nonlinear Dynamics}

For the nonlinear dynamics~\eqref{eq:nonlinear}, we characterize the
control step through a terminal contraction condition with respect to the
assigned barycentric target. Rather than specifying a particular
nonlinear control synthesis method, we assume that the controller used
during each control cycle satisfies the following condition.

\begin{assumption}[Barycentric Terminal Contraction]
\label{ass:contraction}
For some $\rho\in[0,1)$, the controller satisfies
\begin{equation}
\|x_i(k+H)-b_i\|\leq\rho\|x_i(k)-b_i\|
\label{eq:contraction}
\end{equation}
for every agent $i$ during the considered control cycle.
\end{assumption}

This condition can, for example, be established through a finite-horizon
feedback controller or verified for a nonlinear MPC implementation.
Any control method satisfying~\eqref{eq:contraction} can be incorporated
into the proposed assignment--control framework without changing the
descent analysis in the following section. The exact LTI controller in Lemma~\ref{lemma:lti} corresponds to the special case $\rho=0$, yielding $x_i(k+H)=b_i$ at the end of the prescribed control horizon $H$.

\section{Cycle-to-Cycle Descent and Wasserstein Guarantees}
\label{sec:descent}

Building on single-cycle results, this section establishes guarantees for the full alternating assignment-control iteration. Unlike standard receding-horizon control that updates actions at every step, the proposed framework applies the computed control sequence over the entire horizon $H$ prior to re-solving the restricted optimal transport problem and updating barycentric targets. We demonstrate that the resulting surrogate cost $\Psi$ decreases monotonically across cycles, bounds the true Wasserstein distance from above, and specializes to the exact LTI controller of Section~\ref{sec:control}.

\subsection{Control Cycles and the Control-Step Descent}

Let the control cycles be indexed by $\ell = 0, 1, 2, \dots$, where each cycle spans a duration of $H$ steps such that $k_{\ell+1} = k_\ell + H$.

At the beginning of cycle $\ell$, let
$
\pi^\ell
$
be an optimal global transport plan of the restricted OT problem at
$k_\ell$, i.e., $\pi^\ell$ is a feasible global coupling whose nonzero
entries satisfy the prescribed block restriction. Define the associated
barycentric targets by
$
b_i^\ell
=
M\sum_{j=1}^{N}
\pi_{ij}^\ell y_j.
\label{eq:cycle_barycenter}
$

During cycle $\ell$, the coupling $\pi^\ell$ and the barycentric targets
$b_i^\ell$ are fixed. Define the cycle-level surrogate using \eqref{eq:exact_decomposition} by
\begin{equation}
\Psi_\ell
:=
\Psi_{\pi^\ell}(x(k_\ell)).
\label{eq:cycle_psi}
\end{equation}

To analyze reassignment, one additional feasibility condition is needed. The restricted coupling computed at the start of a cycle must remain a
valid candidate coupling at the end of the cycle, so that the
reoptimized coupling of the next cycle can be compared against it.

\begin{assumption}[Persistent Block Membership]
\label{ass:block_persistence}
The block membership of every agent remains unchanged over each control
cycle. Consequently, the block-restricted sparsity pattern defining
$\Pi_{\mathrm{blk}}$ does not change over the cycle, thus the coupling
$\pi^\ell$ computed at the beginning of the cycle remains feasible for
the block-restricted OT problem defined at the new agent positions
$x(k_{\ell+1})$ at the end of the cycle.
\end{assumption}

This assumption is naturally satisfied when partitions are defined by fixed agent assignments or when the control horizon is chosen such that agents do not migrate across block boundaries.

\begin{lemma}[Control-Step Descent]
\label{lem:control_descent}
Under Assumption~\ref{ass:contraction}, the fixed-coupling surrogate
satisfies
\begin{equation}
\begin{aligned}
&\Psi_{\pi^\ell}(x(k_{\ell+1}))
-
\Psi_{\pi^\ell}(x(k_\ell))\\
&\quad\leq
-(1-\rho^2)
\frac1M
\sum_{i=1}^{M}
\|x_i(k_\ell)-b_i^\ell\|^2.
\end{aligned}
\label{eq:control_descent}
\end{equation}
\end{lemma}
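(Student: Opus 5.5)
The plan is to work entirely with the fixed coupling $\pi^\ell$ and reduce the claim to a per-agent statement through Proposition~\ref{prop:decomposition}. Since $\pi^\ell\in\Pi(\mu,\nu)$, the decomposition \eqref{eq:exact_decomposition} holds at every state vector $x$, with the same barycenters $b_i^\ell$ and the same residual $V_{\pi^\ell}$. The residual depends only on $\pi^\ell$ and the target samples $y_j$, not on the agent states.

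First, I will evaluate \eqref{eq:exact_decomposition} at $x(k_{\ell+1})$ and at $x(k_\ell)$ and subtract the two expressions. The residual $V_{\pi^\ell}$ cancels exactly, which gives
\[
\Psi_{\pi^\ell}(x(k_{\ell+1}))-\Psi_{\pi^\ell}(x(k_\ell))
=
\frac1M\sum_{i=1}^{M}\Big(\|x_i(k_{\ell+1})-b_i^\ell\|^2-\|x_i(k_\ell)-b_i^\ell\|^2\Big).
\]
The cancellation is legitimate only because $\pi^\ell$ is held frozen during the cycle. This is exactly the design choice of freezing the coupling, so no appeal to Assumption~\ref{ass:block_persistence} is needed here.

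Second, I will apply Assumption~\ref{ass:contraction} with $k=k_\ell$ and $k+H=k_{\ell+1}$, using the barycentric target $b_i^\ell$ of cycle $\ell$. Squaring the nonnegative inequality \eqref{eq:contraction} gives $\|x_i(k_{\ell+1})-b_i^\ell\|^2\le\rho^2\|x_i(k_\ell)-b_i^\ell\|^2$. Substituting this term by term into the sum above yields $-(1-\rho^2)\frac1M\sum_i\|x_i(k_\ell)-b_i^\ell\|^2$, which is \eqref{eq:control_descent}.

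The only step requiring care is making sure the decomposition is applied with the same coupling, and hence the same $b_i^\ell$ and $V_{\pi^\ell}$, at both time instants. Everything else is direct substitution. As a sanity check, the LTI case of Lemma~\ref{lemma:lti} corresponds to $\rho=0$. In that case the bound states that the surrogate drops by exactly the full tracking term, leaving $\Psi_{\pi^\ell}(x(k_{\ell+1}))=V_{\pi^\ell}$.
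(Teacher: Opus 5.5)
Your proposal is correct and follows essentially the same route as the paper: evaluate the barycentric decomposition of Proposition~\ref{prop:decomposition} at $k_\ell$ and $k_{\ell+1}$ under the frozen coupling so that $V_{\pi^\ell}$ cancels, then square \eqref{eq:contraction}, sum over agents and scale by $1/M$. Your observation that Assumption~\ref{ass:block_persistence} is not needed here also matches the paper, which uses it only for the reassignment step in Theorem~\ref{thm:cycle_descent}.
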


\begin{proof}
From \eqref{eq:exact_decomposition},
$
\Psi_{\pi^\ell}(x(k))
=
\frac1M
\sum_i
\|x_i(k)-b_i^\ell\|^2
+
V_{\pi^\ell}.
$
Since the coupling is fixed throughout the cycle, the residual
$V_{\pi^\ell}$ is identical at $k_\ell$ and $k_{\ell+1}$.

Assumption~\ref{ass:contraction} gives, for every $i$,
\[
\|x_i(k_{\ell+1})-b_i^\ell\|^2
\leq
\rho^2
\|x_i(k_\ell)-b_i^\ell\|^2.
\]
Summing over $i$ and multiplying by $1/M$ gives
\[
\frac1M
\sum_i
\|x_i(k_{\ell+1})-b_i^\ell\|^2
\leq
\frac{\rho^2}{M}
\sum_i
\|x_i(k_\ell)-b_i^\ell\|^2.
\]
Subtracting the decomposition at $k_\ell$ from the decomposition at
$k_{\ell+1}$ cancels the residual term and yields
\[
\begin{aligned}
&\Psi_{\pi^\ell}(x(k_{\ell+1}))
-
\Psi_{\pi^\ell}(x(k_\ell))\leq
\frac{\rho^2-1}{M}
\sum_i
\|x_i(k_\ell)-b_i^\ell\|^2,
\end{aligned}
\]
which is~\eqref{eq:control_descent}.
\end{proof}

\subsection{Cycle-to-Cycle Surrogate Descent}

Lemma~\ref{lem:control_descent} only accounts for the control step under
a coupling that is fixed. The following theorem shows that
reoptimizing the coupling at the end of the cycle can only help, so that
the two effects combine into a single monotone descent guarantee for the
full iteration.

\begin{theorem}[Cycle-to-Cycle Surrogate Descent]
\label{thm:cycle_descent}
Suppose Assumptions~\ref{ass:balance},
\ref{ass:contraction}, and~\ref{ass:block_persistence} hold. Let
$\pi^\ell$ be an optimal global transport plan satisfying the block
restriction at the beginning of cycle $\ell$, and let $\pi^{\ell+1}$ be
an optimal global transport plan satisfying the block restriction
computed at the beginning of cycle $\ell+1$. Then, the sequence
$\{\Psi_\ell\}_{\ell\geq0}$ is monotonically nonincreasing, satisfying
\begin{equation}
\Psi_{\ell+1}
\leq
\Psi_\ell
-
(1-\rho^2)
\frac1M
\sum_{i=1}^{M}
\|x_i(k_\ell)-b_i^\ell\|^2.
\label{eq:global_descent}
\end{equation}
\end{theorem}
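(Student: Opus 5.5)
The plan is to split the cycle-to-cycle change $\Psi_{\ell+1}-\Psi_\ell$ into a reassignment part and a control part:
\[
\Psi_{\ell+1}-\Psi_\ell
=
\bigl[\Psi_{\pi^{\ell+1}}(x(k_{\ell+1}))-\Psi_{\pi^\ell}(x(k_{\ell+1}))\bigr]
+
\bigl[\Psi_{\pi^\ell}(x(k_{\ell+1}))-\Psi_{\pi^\ell}(x(k_\ell))\bigr].
\]
I will show that the first bracket is nonpositive. The second bracket is bounded directly by Lemma~\ref{lem:control_descent}.

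For the first bracket, recall that $\pi^{\ell+1}$ minimizes $\pi\mapsto\Psi_\pi(x(k_{\ell+1}))$ over $\Pi_{\mathrm{blk}}(\mu_{k_{\ell+1}},\nu)$. The key point is that $\pi^\ell$ is an admissible competitor in this minimization. Both marginals are unchanged: each agent still carries mass $1/M$ and each target sample carries $1/N$, so $\pi^\ell\in\Pi$ at the new states. Assumption~\ref{ass:block_persistence} guarantees that the block sparsity pattern is unchanged. Hence $\pi^\ell\in\Pi_{\mathrm{blk}}(\mu_{k_{\ell+1}},\nu)$, and by optimality of $\pi^{\ell+1}$,
\[
\Psi_{\ell+1}=\Psi_{\pi^{\ell+1}}(x(k_{\ell+1}))\le \Psi_{\pi^\ell}(x(k_{\ell+1})).
\]
Assumption~\ref{ass:balance}, through Proposition~\ref{prop:feasible}, ensures that the restricted problem is nonempty at every cycle. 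Since the feasible set is a nonempty compact polytope and the objective is linear in $\pi$, the minimizer $\pi^{\ell+1}$ exists.

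For the second bracket, Lemma~\ref{lem:control_descent}, which applies under Assumption~\ref{ass:contraction}, gives
\[
\Psi_{\pi^\ell}(x(k_{\ell+1}))\le \Psi_\ell-(1-\rho^2)\frac1M\sum_i\|x_i(k_\ell)-b_i^\ell\|^2,
\]
using $\Psi_\ell=\Psi_{\pi^\ell}(x(k_\ell))$. Chaining this with the reassignment inequality yields \eqref{eq:global_descent}. Because $\rho<1$, the subtracted term is nonnegative, so $\{\Psi_\ell\}$ is nonincreasing.

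The only delicate step is justifying that $\pi^\ell$ is feasible for the next restricted problem. The agent positions change, so the block-restricted feasible set could in principle change as well. This is precisely what Assumption~\ref{ass:block_persistence} rules out, and I will state that dependence explicitly. The rest of the argument is bookkeeping.
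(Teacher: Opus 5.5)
Your proposal is correct and follows the paper's proof essentially step for step. The control-step descent lemma bounds the fixed-coupling part. Assumption~\ref{ass:block_persistence} makes $\pi^\ell$ a feasible competitor in the restricted problem at $x(k_{\ell+1})$, so optimality of $\pi^{\ell+1}$ gives $\Psi_{\ell+1}\le\Psi_{\pi^\ell}(x(k_{\ell+1}))$. Your remark on existence of the minimizer (nonempty compact polytope via Proposition~\ref{prop:feasible}, linear objective) is a valid small addition that the paper leaves implicit.
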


\begin{proof}
From Lemma~\ref{lem:control_descent},
\begin{equation}
\Psi_{\pi^\ell}(x(k_{\ell+1}))
\leq
\Psi_{\pi^\ell}(x(k_\ell))
-
(1-\rho^2)
\frac1M
\sum_i
\|x_i(k_\ell)-b_i^\ell\|^2.
\label{eq:first_step}
\end{equation}

Consider the reassignment step at time $k_{\ell+1}$. By
Assumption~\ref{ass:block_persistence}, the old global coupling remains
feasible for the block-restricted transport problem at the new state.
On the other hand, $\pi^{\ell+1}$ is defined as an optimizer over the
same restricted feasible set. Therefore,
\[
\sum_{i,j}
\pi_{ij}^{\ell+1}
\|x_i(k_{\ell+1})-y_j\|^2
\leq
\sum_{i,j}
\pi_{ij}^{\ell}
\|x_i(k_{\ell+1})-y_j\|^2.
\]
By the definition in \eqref{eq:cycle_psi},
\begin{equation}
\Psi_{\ell+1}
\leq
\Psi_{\pi^\ell}(x(k_{\ell+1})).
\label{eq:second_step}
\end{equation}

Combining~\eqref{eq:first_step} and~\eqref{eq:second_step} yields
\[
\begin{aligned}
\Psi_{\ell+1}
&\leq
\Psi_{\pi^\ell}(x(k_{\ell+1}))\\
&\leq
\Psi_\ell
-
(1-\rho^2)
\frac1M
\sum_i
\|x_i(k_\ell)-b_i^\ell\|^2,
\end{aligned}
\]
which proves~\eqref{eq:global_descent} and completes the proof.
\end{proof}

\begin{remark}
Since $0 \leq \rho < 1$, the coefficient $1 - \rho^2$ is strictly positive. Consequently, if at least one agent is not positioned at its assigned barycentric target at the beginning of the cycle ($x_i(k_\ell) \neq b_i^\ell$), the decrement term is strictly positive, yielding strict descent ($\Psi_{\ell+1} < \Psi_\ell$).
\end{remark}

\begin{remark}
Theorem~\ref{thm:cycle_descent} is a statement about the complete
assignment-control iteration. The first inequality \eqref{eq:first_step} is due
to barycentric control, while the second \eqref{eq:second_step} follows from reoptimizing the
restricted transport problem. Thus, unlike a fixed-coupling descent
result, the theorem accounts for the reassignment step.

The persistent block-membership condition is essential for this argument.
If agents cross block boundaries between consecutive cycles, the previous
coupling may no longer satisfy the new block sparsity constraint, and
therefore it cannot in general be used as a feasible comparison point for
the new restricted OT problem.
\end{remark}

\subsection{Wasserstein Upper Bound}

While the preceding analysis establishes the monotonic decrease of the transport surrogate $\Psi_\ell$ across control cycles, its ultimate significance lies in how it bounds the true Wasserstein distance. Because every block-restricted transport plan constitutes a valid feasible point within the global transport set, the evaluated surrogate directly serves as an upper bound on the exact Wasserstein metric. Consequently, the cycle-to-cycle descent guarantees established for $\Psi_\ell$ translate into certified performance bounds on the true distribution-matching objective without requiring the computationally expensive evaluation of the global optimal transport map at every step.

\begin{theorem}[Wasserstein Upper Bound]
\label{thm:wasserstein}
Let $\pi^\ell$ be the global coupling used during cycle $\ell$. Under
Assumption~\ref{ass:block_persistence}, for every
$k\in[k_\ell,k_{\ell+1}]$,
\begin{equation}
\mathcal{W}_2^2(\mu_k,\nu)
\leq
\Psi_{\pi^\ell}(x(k)).
\label{eq:wasserstein_bound}
\end{equation}
In particular,
\begin{equation}
\mathcal{W}_2^2(\mu_{k_{\ell+1}},\nu)
\leq
\Psi_{\ell+1}
\leq
\Psi_\ell.
\label{eq:combined_bound}
\end{equation}
\end{theorem}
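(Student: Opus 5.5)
The plan is to use the feasibility of the frozen coupling and then chain with Theorem~\ref{thm:cycle_descent}. The key observation is that the coupling set $\Pi(\mu_k,\nu)$ in~\eqref{eq:transport_set} depends only on the marginal weights $1/M$ and $1/N$, not on the agent positions. Therefore $\pi^\ell$, which lies in $\Pi(\mu_{k_\ell},\nu)$ by Proposition~\ref{prop:feasible}, is also an element of $\Pi(\mu_k,\nu)$ for every $k\in[k_\ell,k_{\ell+1}]$. Only the cost matrix $\|x_i(k)-y_j\|^2$ changes along the trajectory.

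First, fix $k\in[k_\ell,k_{\ell+1}]$ and note that $\pi^\ell\in\Pi(\mu_k,\nu)$. Assumption~\ref{ass:block_persistence} ensures it even stays in the block-restricted set at the new positions. Strictly speaking, global feasibility is all that is needed here, since block membership only affects the sparsity pattern and not the global marginals.

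Second, apply~\eqref{eq:ot_upper} with $\mu=\mu_k$ and the coupling $\pi^\ell$. This gives $\mathcal{W}_2^2(\mu_k,\nu)\le\sum_{i,j}\pi^\ell_{ij}\|x_i(k)-y_j\|^2=\Psi_{\pi^\ell}(x(k))$, which is~\eqref{eq:wasserstein_bound}.

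For~\eqref{eq:combined_bound}, apply the same argument at $k=k_{\ell+1}$, but using the coupling $\pi^{\ell+1}$. That coupling is block-restricted, hence lies in $\Pi(\mu_{k_{\ell+1}},\nu)$ by Proposition~\ref{prop:feasible}. This yields $\mathcal{W}_2^2(\mu_{k_{\ell+1}},\nu)\le\Psi_{\pi^{\ell+1}}(x(k_{\ell+1}))=\Psi_{\ell+1}$ by definition~\eqref{eq:cycle_psi}. The second inequality $\Psi_{\ell+1}\le\Psi_\ell$ is then read off from~\eqref{eq:global_descent} in Theorem~\ref{thm:cycle_descent}, because the subtracted term $(1-\rho^2)\frac1M\sum_i\|x_i(k_\ell)-b_i^\ell\|^2$ is nonnegative when $\rho\in[0,1)$.

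The argument is essentially bookkeeping, and the only real obstacle is a hypothesis mismatch. The statement lists only Assumption~\ref{ass:block_persistence}, but invoking Theorem~\ref{thm:cycle_descent} for the last inequality also requires Assumptions~\ref{ass:balance} and~\ref{ass:contraction}. I would state explicitly that these standing assumptions of the cycle framework are in force for the second part; without Assumption~\ref{ass:balance} the block-restricted plans need not even exist. The first part, \eqref{eq:wasserstein_bound}, needs nothing beyond the existence of $\pi^\ell$.
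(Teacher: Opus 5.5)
Your proposal is correct and follows the paper's proof essentially step for step. As agents move, $\pi^\ell$ keeps the marginals $1/M$ and $1/N$, so it remains in $\Pi(\mu_k,\nu)$ and bounds $\mathcal{W}_2^2(\mu_k,\nu)$; the combined bound then follows from feasibility of $\pi^{\ell+1}$ together with Theorem~\ref{thm:cycle_descent}. Your side remarks are also accurate: the paper invokes Assumption~\ref{ass:block_persistence} for the first bound even though marginal feasibility alone suffices, and its appeal to Theorem~\ref{thm:cycle_descent} tacitly requires Assumptions~\ref{ass:balance} and~\ref{ass:contraction}, which the statement does not list.
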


\begin{proof}
During cycle $\ell$, the coupling $\pi^\ell$ has row sums
$
\sum_j\pi_{ij}^\ell=\frac1M
$
and column sums
$
\sum_i\pi_{ij}^\ell=\frac1N.
$
The agent dynamics change the locations $x_i(k)$ but do not change the
mass associated with each agent. Therefore, the same matrix $\pi^\ell$
has the prescribed marginal masses for the empirical distribution
$\mu_k$ at every time in the cycle.

By Assumption~\ref{ass:block_persistence}, the global coupling $\pi^\ell$
continues to satisfy the required block restriction during the cycle.
Consequently, it remains a feasible global transport coupling, yielding
$
\pi^\ell
\in
\Pi(\mu_k,\nu).
$
Using the definition of the Wasserstein distance,
\[
\mathcal{W}_2^2(\mu_k,\nu)
=
\min_{\pi\in\Pi(\mu_k,\nu)}
\sum_{i,j}\pi_{ij}\|x_i(k)-y_j\|^2,
\]
and evaluating the objective at the feasible candidate $\pi^\ell$ gives
\[
\mathcal{W}_2^2(\mu_k,\nu)
\leq
\sum_{i,j}
\pi_{ij}^\ell
\|x_i(k)-y_j\|^2
=
\Psi_{\pi^\ell}(x(k)).
\]
This proves~\eqref{eq:wasserstein_bound}.

At $k_{\ell+1}$, Theorem~\ref{thm:cycle_descent} gives
$
\Psi_{\ell+1}\leq\Psi_\ell.
$
Since $\pi^{\ell+1}$ is itself a feasible global coupling,
\[
\mathcal{W}_2^2(\mu_{k_{\ell+1}},\nu)
\leq
\Psi_{\ell+1}.
\]
Combining the inequalities proves~\eqref{eq:combined_bound}.
\end{proof}

\begin{remark}
The preceding result does not imply that the true Wasserstein distance is
itself monotonically decreasing at every cycle. The optimal global
coupling can differ from the restricted coupling, and the upper-bound
surrogate can decrease even when the Wasserstein optimum has a different
cycle-to-cycle behavior. The rigorous guarantee is therefore monotonicity
of the restricted surrogate together with an upper-bound certification of
the Wasserstein distance.
\end{remark}

\subsection{The Linear Time-Invariant Case}
For the exact LTI controller of Lemma~\ref{lemma:lti}, the contraction factor is $\rho=0$, and $x_i(k_{\ell+1})=b_i^\ell$ holds for every agent. Consequently, Assumption~\ref{ass:contraction} holds with equality, and the control-step inequality of Lemma~\ref{lem:control_descent} reduces to an equality. Theorem~\ref{thm:cycle_descent} thus simplifies to
\begin{equation}
\Psi_{\ell+1}
\leq
\Psi_\ell
-
\frac1M
\sum_{i=1}^{M}
\|x_i(k_\ell)-b_i^\ell\|^2.
\label{eq:lti_cycle_descent}
\end{equation}

For the control step before reassignment, the inequality is in fact an
equality:
\begin{equation}
\Psi_{\pi^\ell}(x(k_{\ell+1}))
-
\Psi_{\pi^\ell}(x(k_\ell))
=
-\frac1M
\sum_{i=1}^{M}
\|x_i(k_\ell)-b_i^\ell\|^2.
\label{eq:lti_exact_descent}
\end{equation}

The additional decrease in~\eqref{eq:lti_cycle_descent} stems from the subsequent restricted OT reassignment, which updates the coupling plan $\pi$ at the new states to continually pull agents toward optimal target configurations across cycles. 
Thus, in the LTI case, the total cycle improvement consists of an exact barycentric control decrease followed by a further cost reduction from the reassignment, ensuring the monotonic decrease of the sequence $\{\Psi_\ell\}$.

\section{Computational Complexity and Algorithm Summary}
\label{sec:impl}

\subsection{Local Assignment and Computational Complexity}
\label{subsec:computational_complexity}

A principal advantage of the proposed spatial partitioning architecture is its computational scalability. The block-restricted formulation can be implemented by solving one transport problem within each corresponding pair of spatial blocks. For each block $r=1,\ldots,K$, the local transport problem is formulated as
\begin{equation}
\min_{\pi^{(r)}}
\sum_{x_i\in\mathcal{X}^{(r)}}
\sum_{y_j\in\mathcal{Y}^{(r)}}
\pi_{ij}^{(r)}
\|x_i-y_j\|^2,
\label{eq:local_ot}
\end{equation}
subject to
{\small
\begin{equation}
\begin{aligned}
\sum_{y_j\in\mathcal{Y}^{(r)}}
\pi_{ij}^{(r)}
=
\frac1M,
\quad
\sum_{x_i\in\mathcal{X}^{(r)}}
\pi_{ij}^{(r)}
=
\frac1N,
\quad
r=1,\ldots,K.
\label{eq:local_col}
\end{aligned}
\end{equation}
}

At control cycle $\ell$, the global restricted coupling $\pi^\ell$ is obtained by assembling the $K$ local optimal couplings $\{\pi^{(r),\ell}\}_{r=1}^{K}$ along the block diagonal. Thus, the global restricted optimal transport problem is decoupled into $K$ independent local transportation problems, which share no coupling variables and can be executed entirely in parallel at the beginning of each cycle.

For balanced blocks of comparable size, the $r$-th local problem contains approximately $M/K$ agents and $N/K$ target samples. Thus, the proposed partitioning replaces one global transportation problem of size $M\times N$ with $K$ smaller problems of approximately $(M/K)\times(N/K)$ size, substantially reducing the computational burden.

In the special case where $M=N$ and each local problem reduces to a balanced one-to-one assignment, the dispersion term vanishes ($V_\pi=0$), and exact target-reaching by individual agents directly achieves exact empirical distribution matching. For this specific assignment regime, classical algorithms such as the Hungarian algorithm solve each local block of size $M/K$ with $\mathcal{O}((M/K)^3)$ worst-case complexity \cite{kuhn1955hungarian}. Consequently, solving the $K$ local assignment problems yields a total computational complexity of
\[
K\mathcal{O}\!\left(\left(\frac{M}{K}\right)^{\!3}\right)
=
\mathcal{O}\!\left(\frac{M^3}{K^2}\right),
\]
compared with $\mathcal{O}(M^3)$ for the centralized assignment. This architectural decomposition thus achieves an ideal computational reduction of $\mathcal{O}(K^2)$ in the equal-cardinality regime.

\subsection{Algorithm Summary}
Building upon the localized sub-problem formulation via partitioning, the overall framework is implemented as an alternating assignment and control procedure. The spatial partition of the domain is constructed offline and fixed prior to execution, defining persistent agent-block memberships that satisfy Assumption~\ref{ass:block_persistence}.

The complete computational pipeline is summarized in Algorithm~\ref{alg:main}.
\begin{algorithm}[!h]
\small
\caption{Partition-Restricted OT and Barycentric Control}
\label{alg:main}
\begin{algorithmic}[1]
\Require Initial agent states $\{x_i(0)\}_{i=1}^{M}$, target samples
$\{y_j\}_{j=1}^{N}$, mass-balanced partition, horizon $H$, terminal cycle $R$
\For{$\ell=0,1,\ldots,R-1$}
\State Freeze current agent states $\{x_i(k_\ell)\}_{i=1}^{M}$ at cycle $\ell$.
\State \textbf{parfor} $r = 1, \ldots, K$ \textbf{do}
\State \quad Solve the local OT problem~\eqref{eq:local_ot}--\eqref{eq:local_col} using the current states in block $\mathcal{X}^{(r)}$ to obtain $\pi^{(r),\ell}$.
\State \textbf{end parfor}
\State Assemble the global coupling $\pi^\ell = \mathrm{blockdiag}(\pi^{(1),\ell}, \ldots, \pi^{(K),\ell})$.
\State Compute the barycentric targets $b_i^\ell = M\sum_{j=1}^{N}\pi_{ij}^\ell y_j$.
\If{linear dynamics}
\State Apply~\eqref{eq:lti_control} for $H$ steps.
\Else
\State Apply a controller satisfying~\eqref{eq:contraction}.
\EndIf
\EndFor
\end{algorithmic}
\end{algorithm}
At the onset of each control cycle $\ell$, the framework updates the transport coupling by solving independent local sub-problems, computes the resulting barycentric targets, and executes the finite-horizon control law. By Theorem~\ref{thm:cycle_descent}, this iterative execution guarantees the monotonic nonincreasing behavior of the surrogate $\Psi_\ell$.

\section{Simulations}

This section evaluates the distribution-matching behavior, cycle-to-cycle
surrogate descent, and computational benefit of the proposed
partitioned transport formulation.

The simulation uses $M=30$ agents with the LTI dynamics~\eqref{eq:lti},
where
$\footnotesize
A=\begin{bmatrix}0.9&0.1\\0&0.9\end{bmatrix}$,
$\footnotesize
B=\begin{bmatrix}0\\0.1\end{bmatrix},
$
and $N=500$ target samples drawn from a mixture of four anisotropic
2-D Gaussians. The agents initially form a cluster outside the target
support. Each control cycle uses $H=50$ steps, and the
assignment-control procedure is repeated for 20 cycles. For the
partitioned case, the target samples are divided into $K=5$ spatial
blocks, and the corresponding agent blocks are chosen to satisfy the
mass-balance condition in Assumption~\ref{ass:balance}. Agent block
membership is kept fixed across cycles, consistent with
Assumption~\ref{ass:block_persistence}. Both centralized and local
assignments are solved exactly as balanced transportation linear
programs.

\begin{figure}[!t]
    \centering
    \subfloat[]{
    \includegraphics[width=0.49\linewidth]{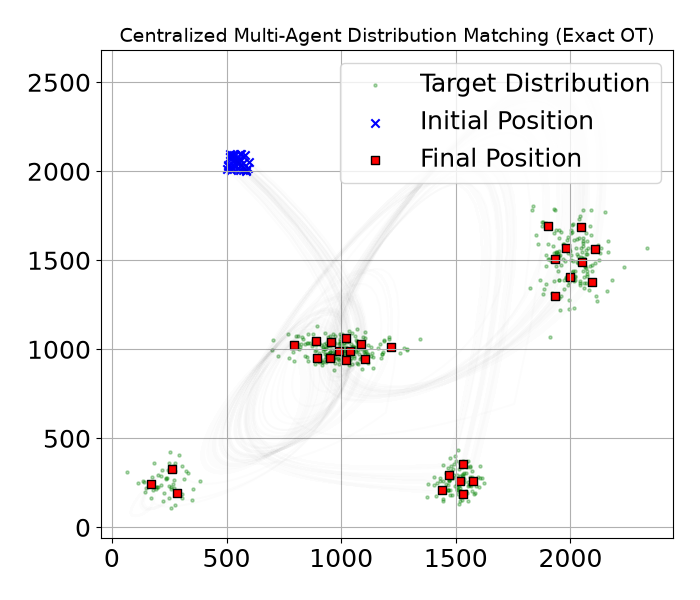}
    }
    \subfloat[]{
    \includegraphics[width=0.49\linewidth]{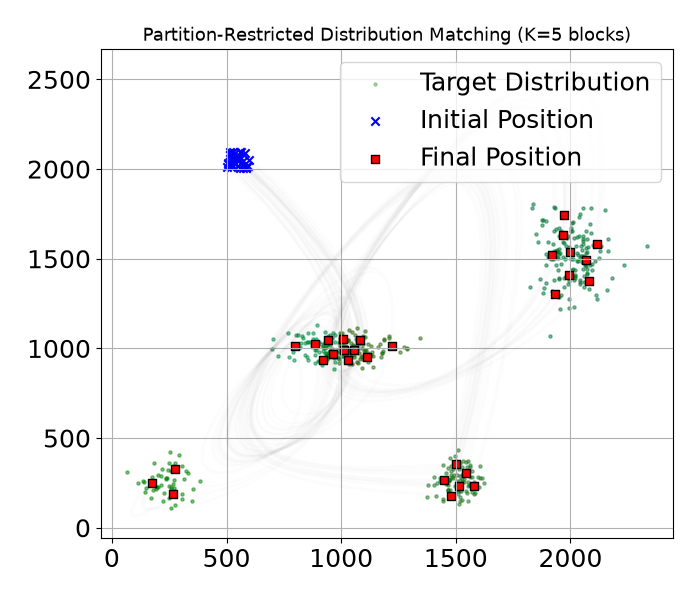}
    }
    \caption{Multi-agent distribution matching trajectories: (a) centralized (non-partitioned) optimal transport and (b) partition-restricted transport with $K=5$ blocks.}
    \label{fig:1}
\end{figure}

Fig.~\ref{fig:1} compares the centralized and partition-restricted
solutions. In both cases, the agents disperse from the initial cluster
and distribute across the four target modes. Proposition~\ref{prop:feasible} guarantees that the
mass-balanced restricted coupling is feasible for the original OT
problem. 
Although the target samples are partitioned by $K$ distinct blocks in the restricted OT method, its overall qualitative behavior for the terminal distribution matching shows a similar performance with respect to the original centralized OT method. 

\begin{figure}[!t]
    \centering
    \includegraphics[width=0.75\linewidth]{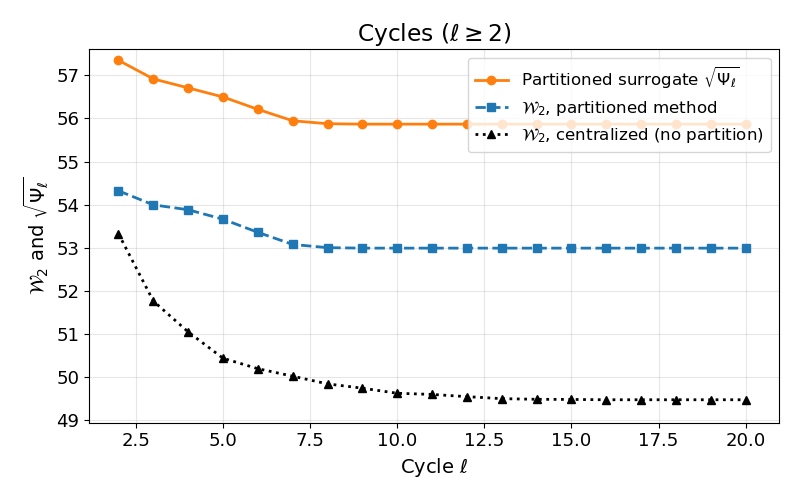}
    \caption{Cycle-to-cycle partitioned surrogate square root $\sqrt{\Psi_\ell}$ and the actual Wasserstein distances $\mathcal{W}_2$ for partitioned and centralized methods over successive control cycles.}
    \label{fig:2}
\end{figure}

Fig.~\ref{fig:2} illustrates the square root of the restricted surrogate, $\sqrt{\Psi_\ell}$ in \eqref{eq:cycle_psi} together with the corresponding Wasserstein distances over successive control cycles. The partitioned surrogate square root $\sqrt{\Psi_\ell}$ decreases monotonically across all cycles, consistent with the cycle-to-cycle descent guaranteed by Theorem~\ref{thm:cycle_descent}. Moreover, the computed Wasserstein distance at each cycle boundary satisfies $\mathcal{W}_2(\mu_{k_{\ell+1}},\nu)\leq\sqrt{\Psi_{\ell+1}}$, as certified by Theorem~\ref{thm:wasserstein}. Thus, Fig.~\ref{fig:2} numerically verifies both the monotone decrease of the partitioned surrogate and its upper-bound relationship with the true Wasserstein metric. While the centralized method achieves a slightly lower Wasserstein distance (better distribution matching) due to its unrestricted transport flexibility, the partitioned approach provides a favorable trade-off, securing substantial computational speedups with only a comparable sacrifice in matching performance.

\begin{figure}[!t]
    \centering
    \includegraphics[width=1\linewidth]{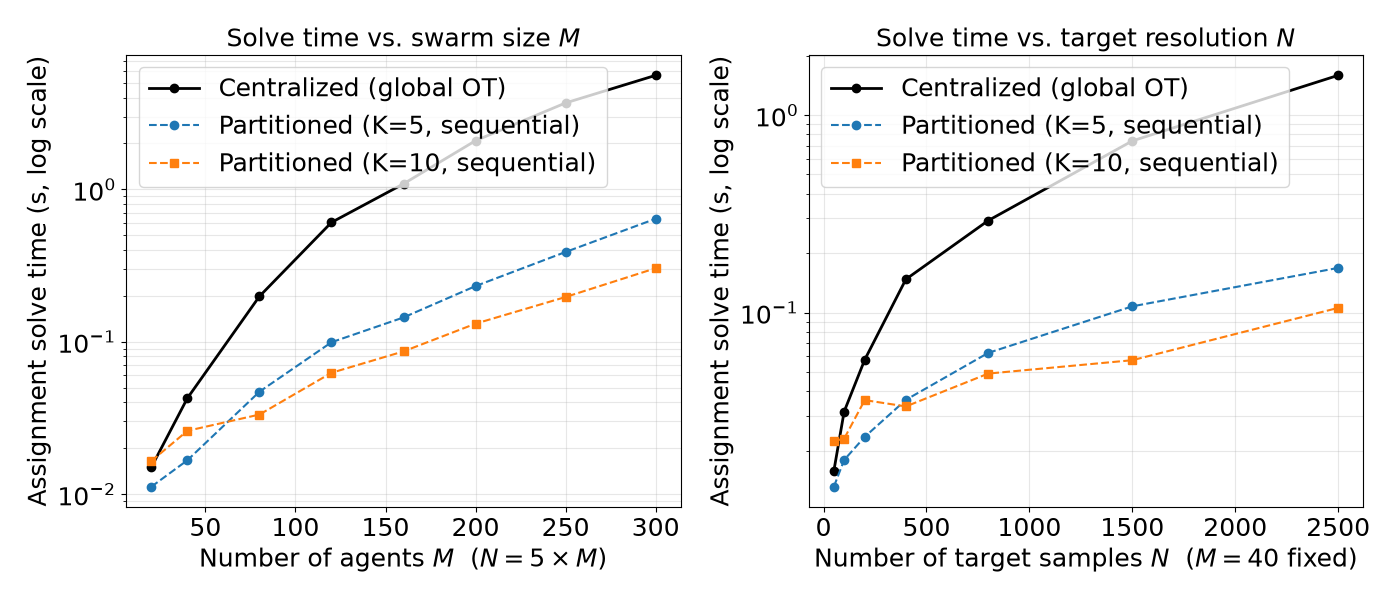}
    \caption{Assignment solve times in logarithmic scale: (left) versus swarm size $M$, and (right) versus target resolution $N$.}
    \label{fig:3}
\end{figure}

Fig.~\ref{fig:3} illustrates the practical benefits of this
restriction. At the largest tested scale ($M=300$, $N=1500$), the
centralized approach requires $5.48\,\mathrm{s}$, while $K=5$ and
$K=10$ achieve solve times of $0.49\,\mathrm{s}$ ($11\times$ speedup)
and $0.21\,\mathrm{s}$ ($26\times$ speedup), respectively. A comparable
trend appears when scaling $N$ at a fixed $M=40$, where $K=10$ operates
roughly $22\times$ faster than the centralized baseline, and in both
sweeps the speedup grows with problem size rather than saturating.
This confirms that partitioning into $K$ blocks reduces each local problem to a
$1/K$ fraction of the agents and target samples, and the resulting
reduction in aggregate solve time is substantial and improves further
as $K$ increases from 5 to 10. The empirical gains are therefore fully
consistent with the computational motivation
for the partitioned formulation.

These simulation results demonstrate the three intended properties of
the proposed framework: feasibility of the mass-balanced restriction,
monotone descent of the restricted surrogate, and reduced assignment
computation through spatial decomposition.

\section{Conclusion}
\label{sec:conclusion}

This paper presented a scalable framework for optimal-transport-based distribution matching by decomposing a global assignment into $K$ spatially restricted local subproblems. By enforcing a block mass-balance condition, the restricted coupling preserves feasibility for the original transport problem, certifying a rigorous Wasserstein upper bound. Combined with finite-horizon barycentric control, the formulation guarantees a monotonic cycle-to-cycle descent of the transport surrogate, which upper-bounds the true Wasserstein metric. Numerical evaluations demonstrated that this spatial restriction significantly enhances computational efficiency, reducing assignment solve times by an order of magnitude or more while maintaining matching performance comparable to global methods. Consequently, the proposed approach renders large-scale distribution matching tractable without compromising theoretical convergence or feasibility guarantees.

\bibliographystyle{ieeetr}
\bibliography{reference}

@article{kuhn1955hungarian,
  title={The Hungarian method for the assignment problem},
  author={Kuhn, Harold W},
  journal={Naval research logistics quarterly},
  volume={2},
  number={1-2},
  pages={83--97},
  year={1955},
  publisher={Wiley Online Library}
}

@book{villani2009optimal,
  title={Optimal transport: old and new},
  author={Villani, C{\'e}dric and others},
  volume={338},
  year={2009},
  publisher={Springer}
}

@article{lee2026tac,
  author  = {Lee, Kooktae},
  title   = {Optimal Transport-Based Decentralized Multi-Agent
             Distribution Matching},
  journal = {IEEE Transactions on Automatic Control},
  volume  = {71},
  number  = {8},
  pages   = {5478--5485},
  year    = {2026},
  doi     = {10.1109/TAC.2026.3668445}
}

@article{seo2025d2oc,
  author  = {Seo, Sungjun and Lee, Kooktae},
  title   = {Density-Driven Optimal Control for Efficient and
             Collaborative Multiagent Nonuniform Coverage},
  journal = {IEEE Transactions on Systems, Man, and Cybernetics:
             Systems},
  volume  = {55},
  number  = {12},
  pages   = {9340--9354},
  year    = {2025},
  doi     = {10.1109/TSMC.2025.3622075}
}

@article{seo2026tcst,
  author  = {Seo, Sungjun and Lee, Kooktae},
  title   = {Density-Driven Multidrone Coordination for Efficient
             Farm Coverage and Management in Smart Agriculture},
  journal = {IEEE Transactions on Control Systems Technology},
  volume  = {34},
  number  = {2},
  pages   = {711--724},
  year    = {2026},
  doi     = {10.1109/TCST.2025.3631091}
}

@article{lee2026automatica,
  author  = {Lee, Kooktae},
  title   = {Density-Driven Optimal Control: Convergence Guarantees
             for Stochastic LTI Multi-Agent Systems},
  journal = {Automatica},
  volume  = {193},
  pages   = {113231},
  year    = {2026},
  doi     = {10.1016/j.automatica.2026.113231}
}

@inproceedings{bandyopadhyay2014,
  author    = {Bandyopadhyay, Saptarshi and Chung, Soon-Jo and Hadaegh, Fred Y.},
  title     = {Probabilistic Swarm Guidance Using Optimal Transport},
  booktitle = {2014 IEEE Conference on Control Applications (CCA)},
  pages     = {498--505},
  year      = {2014},
  doi       = {10.1109/CCA.2014.6981395}
}

@article{hartmann2020,
  author  = {Hartmann, Valentin and Schuhmacher, Dominic},
  title   = {Semi-discrete optimal transport: a solution procedure for the unsquared Euclidean distance case},
  journal = {Mathematical Methods of Operations Research},
  volume  = {92},
  number  = {1},
  pages   = {133--163},
  year    = {2020},
  doi     = {10.1007/s00186-020-00703-z}
}

@article{kantorovich1942,
  title={On the Translocation of Masses.},
  author={Kantorovich, Leonid V},
  journal={Journal of mathematical sciences},
  volume={133},
  number={4},
  pages={1381},
  year={2006}
}

@article{gabriel2019computational,
  title={Computational optimal transport with applications to data sciences},
  author={Gabriel, Peyr{\'e} and Marco, Cuturi},
  journal={Foundations and trends{\textregistered} in machine learning},
  volume={11},
  number={5-6},
  pages={355--607},
  year={2019},
  publisher={Emerald Publishing Limited}
}

@article{cortes2004,
  author  = {Cort{\'e}s, Jorge and Mart{\'i}nez, Sonia and Karatas, Timur and Bullo, Francesco},
  title   = {Coverage Control for Mobile Sensing Networks},
  journal = {IEEE Transactions on Robotics and Automation},
  volume  = {20},
  number  = {2},
  pages   = {243--255},
  year    = {2004},
  doi     = {10.1109/TRA.2004.824698}
}

@article{inoue2021,
  author  = {Inoue, Daisuke and Ito, Yuji and Yoshida, Hiroaki},
  title   = {Optimal Transport-Based Coverage Control for Swarm Robot Systems: Generalization of the Voronoi Tessellation-Based Method},
  journal = {IEEE Control Systems Letters},
  volume  = {5},
  number  = {4},
  pages   = {1483--1488},
  year    = {2021},
  doi     = {10.1109/LCSYS.2020.3039008}
}

@article{chen2021ot,
  author  = {Chen, Yongxin and Georgiou, Tryphon T. and Pavon, Michele},
  title   = {Optimal Transport in Systems and Control},
  journal = {Annual Review of Control, Robotics, and Autonomous Systems},
  volume  = {4},
  pages   = {89--113},
  year    = {2021},
  doi     = {10.1146/annurev-control-070220-100858}
}

@article{chen2024density,
  author  = {Chen, Yongxin},
  title   = {Density Control of Interacting Agent Systems},
  journal = {IEEE Transactions on Automatic Control},
  volume  = {69},
  number  = {1},
  pages   = {246--260},
  year    = {2024},
  doi     = {10.1109/TAC.2023.3271226}
}

@article{krishnan2025,
  author  = {Krishnan, Vishaal and Mart{\'i}nez, Sonia},
  title   = {Distributed Online Optimization for Multi-Agent Optimal Transport},
  journal = {Automatica},
  year    = {2025},
  doi     = {10.1016/j.automatica.2024.111880}
}

\end{document}